\documentclass[webpdf,traditional,medium,namedate]{oup-authoring-template}
\onecolumn
\newif\iftikzX
\tikzXtrue
\tikzXfalse
\newif\ifFIGS 
\FIGSfalse    
\FIGStrue
\usetikzlibrary{external} 
\usepackage[T1]{fontenc}
\usepackage{amsmath,amssymb,mathtools,amsthm}
\usepackage{graphicx}
\usepackage{booktabs}
\usepackage{float}
\usepackage[section]{placeins}
\usepackage{array}
\usepackage{multirow}
\usepackage{url}
\usepackage{xcolor}
\usepackage{tikz}
\usepackage{pgfplots}
\usepackage{algpseudocode}
\usepackage[hidelinks]{hyperref}
\hypersetup{
  pdftitle={A Conditional-Distribution Framework for Validating Synthetic Multivariate Data},
  pdfauthor={Hari Dahal and Ishanu Chattopadhyay},
  pdfkeywords={conditional distributions, genomic surveillance, health data, model validation, real-world data, synthetic data}
}
\usetikzlibrary{shapes,calc,positioning,patterns,decorations.markings}
\usepgfplotslibrary{groupplots,statistics,colorbrewer}
\pgfplotsset{compat=1.18}

\newtheorem{theorem}{Theorem}[section]
\newtheorem{lemma}[theorem]{Lemma}
\newtheorem{corollary}[theorem]{Corollary}

\newcommand{\cgather}[2][\EQSP]{\begingroup\setlength\abovedisplayskip{#1}\setlength\belowdisplayskip{#1}\begin{gather} #2 \end{gather}\endgroup\noindent}

\newcommand{\calign}[2][\EQSP]{\begingroup\setlength\abovedisplayskip{#1}\setlength\belowdisplayskip{#1}\begin{align} #2 \end{align}\endgroup\noindent}

\begin{document}
\raggedbottom
\setlength{\emergencystretch}{2em}
\setlength{\textfloatsep}{12pt plus 2pt minus 2pt}
\setlength{\intextsep}{10pt plus 2pt minus 2pt}

\journaltitle{Biostatistics}
\copyrightyear{2026}
\pubyear{2026}
\firstpage{1}

\title[Conditional-distribution validation]{A Conditional-Distribution Framework for Validating Synthetic Multivariate Data}
\authormark{Dahal and Chattopadhyay}
\author[1]{Hari Dahal}
\author[1,$\ast$]{Ishanu Chattopadhyay}
\address[1]{University of Kentucky, Lexington, Kentucky, USA}
\corresp[$\ast$]{Corresponding author. Email: ishanu\_ch@uky.edu}

\abstract{Statistical validation of synthetic multivariate data requires assessing whether a generator preserves the joint dependence structure of the target population without merely reproducing observed records. We develop a model-agnostic framework based on full conditional distributions. For each coordinate, we normalize the conditional probability assigned to the observed value by the largest conditional probability available in the same record context; averaging this quantity yields a one-sided MAP-alignment statistic that can be estimated using a conditional model fitted on held-out real data. The mathematical contribution is twofold: under strict positivity and compatibility, the complete normalized conditional profile identifies the joint distribution, and its integrated $L^1$ difference defines a metric on finite-state generative processes; we also establish consistency and finite-sample concentration for the corresponding empirical estimators. Because high conditional alignment alone can arise from copying or concentration on conditional modes, we pair it with nearest-real similarity as a separate record-level novelty diagnostic. We evaluate the framework on NSHAP health and aging data, influenza B genomic surveillance, and 34 General Social Survey waves. In GSS, the Large Science Model matched the original-data control in mean conditional alignment while retaining substantial novelty, indicating preservation of conditional structure without row reuse. In influenza B, a Chow--Liu generator matched the control alignment but had almost no novelty, revealing near-reproduction of observed records. In NSHAP, alignment above the control together with unusually high novelty exposed concentration near conditional modes rather than improved fidelity. The framework therefore distinguishes three statistically different failure modes---loss of dependence, record reuse, and mode concentration---and provides a principled basis for validating synthetic health, surveillance, and population data.}

\keywords{conditional distributions; genomic surveillance; health data; model validation; real-world data; synthetic data}
\maketitle

\section{Introduction}
Synthetic multivariate data are increasingly used in biomedical research, public-health surveillance, and population studies to support data sharing, method development, and analyses when direct access to individual-level records is restricted. Their validity is a joint-distribution problem. Conditional associations, interactions, subgroup distributions, multivariable risk patterns, and many target estimands depend on how variables vary together within records. A synthetic dataset can reproduce every marginal distribution, and may even perform well for a selected prediction task, while altering dependence structures that matter for other scientific analyses. Conversely, apparent fidelity can be achieved by reproducing source records rather than learning the population-generating process.

Most available evaluations do not directly target this distinction. Likelihood-based measures, including the evidence lower bound and perplexity, require tractable model probabilities and are not available for many generators \citep{kingma2014autoencoding,jelinek1977perplexity}. Task-based evaluations depend on a chosen outcome, estimand, and analysis model, so successful performance on one task does not establish general multivariate validity \citep{esteban2017realvalued}. Embedding distances and classifier two-sample tests depend on a representation that may not preserve scientifically relevant conditional structure \citep{heusel2017gans,lopezpaz2017revisiting}. Marginal moments and covariance matrices remain useful diagnostics, but they do not determine nonlinear or higher-order dependence \citep{snoke2018general,nowok2016synthpop}; Section~\ref{sec:examples} gives distributions with identical first- and second-order moments but different conditional behavior.

We formulate validation through full conditional distributions, which are statistically natural objects because, under strict positivity and compatibility, they identify the joint law \citep{brook1964,dobrushin1968,hammersley1971,besag1974}. The complete normalized conditional profile is therefore not an arbitrary utility score: we show that it retains identification of the underlying finite-state distribution and that its integrated $L^1$ difference defines a metric on generative processes. For sample-based evaluation, we derive a one-sided MAP-alignment statistic by fitting a conditional evaluator to real data and measuring, for each coordinate and record context, how closely the observed value agrees with the evaluator's conditional mode. We establish finite-sample concentration and consistency for empirical estimation under the stated conditions. This construction permits evaluation of generators without tractable likelihoods and without committing validation to a single downstream task.

MAP alignment measures structural agreement with a real-data evaluator, but it cannot by itself determine how that agreement was obtained. We therefore pair it with nearest-real similarity, which measures proximity to individual source records. The two quantities distinguish three failures that a single utility score can conflate: degradation of conditional dependence, near-reproduction of observed records, and overconcentration on conditional modes. The empirical studies are designed to illustrate these distinct regimes rather than to produce a universal ranking of generators. NSHAP provides a multidomain human-health setting, influenza B provides a genomic-surveillance setting in which near-identical records can occur naturally, and the GSS provides a heterogeneous high-dimensional stress test. The Large Science Model comparator uses a recursive conditional architecture previously applied to the infant microbiome and influenza evolution \citep{sizemore2024digital,wu2026emergenet}; the methodological contribution here is the statistical validation framework, not a claim of universal superiority for that generator class.

\section{Formal Setup for Conditional Analysis}
Let
\cgather{
X=(X^1,\dots,X^N)
}
take values in the finite product space
\cgather{
\mathcal{X}=\mathcal{X}^1\times\cdots\times\mathcal{X}^N.
}
Let $P$ be a strictly positive data-generating distribution. For coordinate
$i$, define the full conditional
\cgather{
P_i(x^i\mid x^{-i})
:=P(X^i=x^i\mid X^{-i}=x^{-i}).
}
A learned conditional kernel is denoted by
$\phi^i(\cdot\mid x^{-i})$. In practice, a small $\varepsilon$-floor may be
added and the conditional vector renormalized. Positivity is used together
with compatibility; it does not by itself guarantee that an arbitrary family
of kernels corresponds to a joint distribution.

\section{MAP-Alignment Functional}
For any model $\{\phi^i\}$ and sample $x\in\mathcal{X}$, define
\cgather{
    \upsilon(x,i)
    :=
    \frac{\phi^i(x^i\mid x^{-i})}
         {\max_{y\in \mathcal{X}^i}
          \phi^i(y\mid x^{-i})},
}
which equals $1$ exactly when $x^i$ is a maximizer of the model conditional.

Given a dataset $D=\{x_{k}\}_{k=1}^M$, define
\cgather{
    \Upsilon(D)
    :=
    \frac{1}{MN}
    \sum_{k=1}^M\sum_{i=1}^N
    \upsilon(x_{k},i),
}
an empirical estimate of
\cgather{
    \Upsilon_\phi(P)
    =
    \mathbb{E}_{X\sim P}
    \left[
        \frac{1}{N}\sum_{i=1}^N \upsilon(X,i)
    \right].
}
\subsection{Behavior Under Exact Conditionals}

\begin{lemma}[MAP-Alignment Under Exact Conditionals]
\label{lem:map}
Assume $\phi^i = P_i$ for all $i$.
Fix $i$ and $x^{-i}$, and let $p_j := P_i(j\mid x^{-i})$,
$p_{\max} := \max_j p_j$.
If $X^i \sim P_i(\cdot\mid x^{-i})$, then
\cgather{
    \upsilon(X,i) = \frac{P_i(X^i\mid x^{-i})}{p_{\max}},
\qquad
    \mathbb{E}[\upsilon(X,i)\mid x^{-i}]
    =
    \frac{1}{p_{\max}}\sum_j p_j^2.
}
Moreover:
\begin{itemize}
\item $\upsilon(X,i)=1$ iff $X^i\in\arg\max_j p_j$.
\item If $P_i(\cdot\mid x^{-i})$ is uniform on its support,
      then $\mathbb{E}[\upsilon(X,i)\mid x^{-i}] = 1$.
\item If some $p_j < p_{\max}$, then
      $\mathbb{E}[\upsilon(X,i)\mid x^{-i}] < 1$.
\end{itemize}
\end{lemma}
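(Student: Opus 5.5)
The plan is to fix $i$ and the context $x^{-i}$ throughout. The whole statement then reduces to elementary facts about a single finite probability vector $p=(p_j)_{j\in\mathcal{X}^i}$ with $p_{\max}>0$. Strict positivity of $P$ guarantees that the conditional is well defined and that every $p_j>0$. The argument only needs $p_{\max}>0$, which holds for any probability vector.

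First, I substitute $\phi^i=P_i$ into the definition of $\upsilon(x,i)$. The denominator $\max_y \phi^i(y\mid x^{-i})$ becomes $p_{\max}$, which gives the first displayed identity directly.

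For the conditional expectation, I sum over the values $j$ of $X^i$ with weights $p_j$:
\[
\mathbb{E}[\upsilon(X,i)\mid x^{-i}]=\sum_j p_j\cdot\frac{p_j}{p_{\max}}=\frac{1}{p_{\max}}\sum_j p_j^2 .
\]

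The first bullet follows because $\upsilon(X,i)=p_{X^i}/p_{\max}$ equals $1$ exactly when $p_{X^i}=p_{\max}$, that is, when $X^i$ is a maximizer.

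For the remaining two bullets I use the bound $\sum_j p_j^2\le p_{\max}\sum_j p_j=p_{\max}$. The difference is
\[
p_{\max}-\sum_j p_j^2=\sum_j p_j\,(p_{\max}-p_j),
\]
which is a sum of nonnegative terms. It vanishes iff $p_j=p_{\max}$ for every $j$ with $p_j>0$, i.e.\ iff the conditional is uniform on its support.

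The uniform case then gives expectation exactly $1$. For a direct check, with support size $s$ we have $p_{\max}=1/s$ and $\sum_j p_j^2=s\cdot s^{-2}=1/s$, so the ratio is $1$.

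For the third bullet, suppose some $p_j<p_{\max}$. The main point is that this index must carry positive mass, otherwise its term $p_j(p_{\max}-p_j)$ contributes zero. Under strict positivity every $p_j>0$, so that term is strictly positive and the expectation is strictly less than $1$.

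This is the only delicate step. The hypothesis ``some $p_j<p_{\max}$'' should be read over the support, so that it does not clash with the second bullet when $p_j=0$ is allowed. In the present setting, the standing strict-positivity assumption on $P$ resolves this, since all $p_j>0$. I will state this explicitly.
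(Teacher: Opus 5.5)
Your proof is correct and is essentially the paper's argument. The paper's proof consists only of the definition plus the bound $\sum_j p_j^2\le p_{\max}\sum_j p_j=p_{\max}$; your identity $p_{\max}-\sum_j p_j^2=\sum_j p_j(p_{\max}-p_j)$ makes the equality cases explicit, and your point that the third bullet needs the index with $p_j<p_{\max}$ to carry positive mass is a correct refinement the one-line proof leaves implicit (this is guaranteed by the standing strict-positivity assumption, and without it $p=(1/2,1/2,0)$ satisfies the hypothesis yet gives expectation $1$).
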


\begin{proof}
Immediate from the definition and the fact that
$\sum_j p_j^2 \le p_{\max}\sum_j p_j = p_{\max}$.
\end{proof}

Thus, under exact conditionals, $\upsilon(X,i)$ is a normalized
conditional-likelihood score. Its population target is
\cgather{
    \Upsilon_{\mathrm{oracle}}(P)
    :=
    \mathbb{E}_{X\sim P}
    \left[
        \frac{1}{N}\sum_{i=1}^N
        \frac{P_i(X^i\mid X^{-i})}
             {\max_y P_i(y\mid X^{-i})}
    \right].
    \label{eq:oracle_ratio}}
This target is generally smaller than one when the exact conditional is
nonuniform. Consequently, synthetic MAP alignment is assessed by agreement with a
real-data reference score, not by requiring $\Upsilon\to1$.

\section{Brook--Dobrushin Factorization and Identifiability}
Brook's lemma~\citep{brook1964} and related consistency results~\citep{dobrushin1968}
show how a compatible system of full conditionals identifies a strictly
positive joint distribution.

Assume
\cgather{
    \phi^i(x^i\mid x^{-i}) = P_i(x^i\mid x^{-i})
    \quad \text{whenever } P(x)>0.
}
Choose a reference configuration $x^\circ$ with $P(x^\circ)>0$.
Define the interpolating sequence
\cgather{
    x^{(0)} = x^\circ, \qquad
    x^{(i)} =
    (x^1,\dots,x^i, x^{i+1,\circ},\dots,x^{N,\circ}),
}
so $x^{(N)} = x$.
By repeated conditioning,
\cgather{
    \frac{P(x^{(i)})}{P(x^{(i-1)})}
    =
    \frac{P_i(x^i\mid x^{<i}, x^{>i,\circ})}
         {P_i(x^{i,\circ}\mid x^{<i}, x^{>i,\circ})}.
}
Multiplying yields the Brook factorization~\citep{brook1964}:
\cgather{
    \frac{P(x)}{P(x^\circ)}
    =
    \prod_{i=1}^N
    \frac{P_i(x^i\mid x^{<i}, x^{>i,\circ})}
         {P_i(x^{i,\circ}\mid x^{<i}, x^{>i,\circ})}.
}
Replacing $P_i$ by $\phi^i$ on the support gives an explicit
reconstruction of $P$ from the conditionals.

\begin{theorem}[Identification from compatible full conditionals]
\label{thm:brook}
Let $P$ and $Q$ be strictly positive distributions on the same finite product
space. If their full conditionals agree for every coordinate and every
configuration, then $P=Q$.
\end{theorem}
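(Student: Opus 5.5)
The plan is to apply the Brook factorization derived just above to $P$ and to $Q$ separately, and then use normalization to fix the one remaining free constant. First I fix a reference configuration $x^\circ\in\mathcal{X}$. Strict positivity of both distributions gives $P(x^\circ)>0$ and $Q(x^\circ)>0$, and also makes every intermediate configuration $x^{(i)}$ in the interpolating sequence have positive mass under both measures. Because of this, every conditional that appears in the factorization is well defined and strictly positive, so the ratios
\[
\frac{P_i(x^i\mid x^{<i},x^{>i,\circ})}{P_i(x^{i,\circ}\mid x^{<i},x^{>i,\circ})}
\]
are finite and nonzero. I would also verify the telescoping identity $P(x^{(i)})/P(x^{(i-1)})$ explicitly. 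The two configurations $x^{(i)}$ and $x^{(i-1)}$ differ only in coordinate $i$ and share the context $(x^{<i},x^{>i,\circ})$. Writing each as the conditional times the marginal of that shared context, the marginals cancel. This is the only place where positivity is essential, since it guarantees the shared marginal is nonzero.

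Next I apply the identity to both measures. For every $x\in\mathcal{X}$ this gives
\[
\frac{P(x)}{P(x^\circ)}=\prod_{i=1}^N\frac{P_i(x^i\mid x^{<i},x^{>i,\circ})}{P_i(x^{i,\circ}\mid x^{<i},x^{>i,\circ})},
\]
and the same formula with $Q$ and the conditionals $Q_i$. By hypothesis, $P_i(\cdot\mid z)=Q_i(\cdot\mid z)$ for every coordinate $i$ and every context $z$. Every factor on the right-hand side is one of these conditionals evaluated at some specific context, so the two right-hand sides coincide term by term. Hence $P(x)/P(x^\circ)=Q(x)/Q(x^\circ)$ for all $x$. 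Equivalently, $P(x)=c\,Q(x)$ with $c=P(x^\circ)/Q(x^\circ)>0$.

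Finally, I sum over the finite space $\mathcal{X}$. Since $1=\sum_x P(x)=c\sum_x Q(x)=c$, we get $c=1$ and therefore $P=Q$.

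I do not expect a serious obstacle here. The one step needing care is to state precisely why each intermediate ratio is well defined: the conditionals used must be evaluated at contexts that carry positive mass, and that is where strict positivity is invoked rather than merely compatibility. Without positivity the chain could pass through a zero-probability configuration, and the argument would break down.
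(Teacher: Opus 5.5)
Your proposal is correct and follows the paper's proof: you apply the Brook factorization about a common reference configuration to both $P$ and $Q$, use the equality of full conditionals to get $P(x)/P(x^\circ)=Q(x)/Q(x^\circ)$ for all $x$, and conclude $P=Q$ by normalization. Your explicit check of the one-step telescoping identity, and of where strict positivity is needed, usefully spells out what the paper compresses into ``by repeated conditioning.''
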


\begin{proof}
Brook's factorization reconstructs each joint probability ratio relative to a
fixed reference configuration from the full conditional system. Equality of
all full conditionals therefore gives identical probability ratios under $P$
and $Q$; normalization gives $P=Q$~\citep{brook1964,dobrushin1968}. The
compatibility assumption is essential: positivity alone does not guarantee that
an arbitrary collection of conditional kernels corresponds to a joint law.
\end{proof}

\section{Evaluating Generators via \texorpdfstring{$\Upsilon$}{Upsilon}}
For any model $\{\phi^i\}$,
\cgather{
    \Upsilon_\phi(P)
    =
    \mathbb{E}_{X\sim P}
    \left[
        \frac{1}{N}\sum_{i=1}^N
        \frac{\phi^i(X^i\mid X^{-i})}
             {\max_y \phi^i(y\mid X^{-i})}
    \right].
}
Let $D_{\mathrm{test}}$ be i.i.d.\ data from $P$. If $\phi^i_n\to P_i$
pointwise on the support and satisfy strict positivity, then dominated
convergence yields
\cgather{
    \Upsilon_{\phi_n}(P)
    \to
    \Upsilon_{\mathrm{oracle}}(P).
}
\begin{corollary}[Conditional-Kernel Convergence and MAP-Alignment Consistency]
Let $\phi^i_n$ be strictly positive kernels converging to $P_i$ on
$\mathrm{supp}(P)$. Let $\widetilde P_n$ denote the joint compatible
with $\{\phi^i_n\}$. Then:
\begin{itemize}
\item $\widetilde P_n \to P$ pointwise (equivalently, in total variation on the finite state space).
\item For any sequence of test sets $D_{\mathrm{test}}$ with
      $|D_{\mathrm{test}}|\to\infty$, the empirical scores
      $\Upsilon(D_{\mathrm{test}};\phi_n)$ converge in probability to
      $\Upsilon_{\mathrm{oracle}}(P)$.
\end{itemize}
Both conclusions follow from the assumed convergence of the full conditional
kernels. Convergence, or numerical agreement, of the scalar MAP-alignment score
alone does not imply convergence of the conditional kernels or recovery of the
joint law. Systematically low MAP alignment can nevertheless reveal disagreement
with the evaluator's conditional modes even when marginal summaries look
satisfactory.
\end{corollary}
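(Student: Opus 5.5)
The plan is to prove the two bullets separately. Both reduce to continuity of explicit finite-dimensional maps at the limit conditionals $P_i$, followed by a law of large numbers. Since $P$ is strictly positive, $\mathrm{supp}(P)=\mathcal{X}$, so the assumed convergence $\phi^i_n\to P_i$ holds at every coordinate $i$, every context $x^{-i}$, and every value $x^i$.

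For the first bullet, I apply the Brook factorization to $\widetilde P_n$. Because $\{\phi^i_n\}$ is compatible and strictly positive, $\widetilde P_n$ is strictly positive, and its full conditionals are the $\phi^i_n$. Fix the reference configuration $x^\circ$. Then
\[
\frac{\widetilde P_n(x)}{\widetilde P_n(x^\circ)}=\prod_{i=1}^N\frac{\phi^i_n(x^i\mid x^{<i},x^{>i,\circ})}{\phi^i_n(x^{i,\circ}\mid x^{<i},x^{>i,\circ})}.
\]
Each factor is a ratio of finitely many numbers converging to the corresponding $P_i$ values. The limiting denominators $P_i(x^{i,\circ}\mid\cdot)$ are strictly positive, because conditionals of a positive joint law are positive. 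Hence each ratio converges to the matching factor in the Brook factorization of $P$, and so $\widetilde P_n(x)/\widetilde P_n(x^\circ)\to P(x)/P(x^\circ)$ for every $x$.

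Normalization gives
\[
\widetilde P_n(x)=\frac{r_n(x)}{\sum_y r_n(y)},
\]
where $r_n$ is the ratio above. The sum is finite and converges to $1/P(x^\circ)>0$, so $\widetilde P_n(x)\to P(x)$ pointwise. On a finite space this is equivalent to convergence in total variation, since $\|\widetilde P_n-P\|_{TV}=\tfrac12\sum_x|\widetilde P_n(x)-P(x)|$ is a finite sum of vanishing terms. Theorem~\ref{thm:brook} is not needed here, although it explains why the limit can only be $P$.

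For the second bullet, I split the error into two terms:
\[
\Upsilon(D;\phi_n)-\Upsilon_{\mathrm{oracle}}(P)=\bigl[\Upsilon(D;\phi_n)-\Upsilon_{\phi_n}(P)\bigr]+\bigl[\Upsilon_{\phi_n}(P)-\Upsilon_{\mathrm{oracle}}(P)\bigr].
\]
The second, deterministic term tends to $0$ by the dominated-convergence remark preceding the corollary. More concretely, the map $v\mapsto v_j/\max_y v_y$ is continuous on vectors with positive maximum, and $\mathcal{X}$ is finite, so the scores $\upsilon_n(x,i)\to\upsilon(x,i)$ uniformly in $(x,i)$.

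For the first, stochastic term, each summand $\upsilon_n(x_k,i)$ lies in $[0,1]$. The per-record averages $\frac1N\sum_i\upsilon_n(x_k,i)$ are therefore i.i.d.\ and bounded in $[0,1]$. They have mean $\Upsilon_{\phi_n}(P)$, provided $\phi_n$ is fitted independently of $D_{\mathrm{test}}$, which is the held-out design. Hoeffding's inequality then gives
\[
\Pr\bigl(|\Upsilon(D;\phi_n)-\Upsilon_{\phi_n}(P)|>t\bigr)\le 2e^{-2Mt^2},
\]
uniformly in $n$ because the bound depends only on the range $[0,1]$. Hence this term tends to $0$ in probability as $M=|D_{\mathrm{test}}|\to\infty$, regardless of how $n$ and $M$ grow jointly. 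Combining the two terms yields the claimed convergence in probability. The closing remarks of the corollary are commentary and need no proof beyond noting that $\Upsilon$ is a scalar summary; a counterexample such as two distinct conditionals with the same mode ratios can illustrate this if desired.

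The main obstacle is conceptual rather than technical: making sure the hypotheses are strong enough. Two points need to be fixed explicitly. First, "the joint compatible with $\{\phi^i_n\}$" presupposes that compatibility holds for each $n$; I will state this as a standing assumption rather than derive it. Second, the uniform-in-$n$ concentration requires independence between the fitted evaluator $\phi_n$ and $D_{\mathrm{test}}$. If $\phi_n$ were random, I would condition on it and apply the same Hoeffding bound conditionally.
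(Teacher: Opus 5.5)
Your proof is correct and takes the same approach as the paper. The paper gives no separate proof: it only asserts that both bullets follow from kernel convergence, namely continuity of the Brook reconstruction for the joint, dominated convergence for $\Upsilon_{\phi_n}(P)\to\Upsilon_{\mathrm{oracle}}(P)$, and the row-level Hoeffding bound it states later for $\hat\Upsilon_{\Phi^{(n)}}$. You supply exactly these details, including the normalization step and the uniformity of the Hoeffding bound in $n$.

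One caution on your optional illustration: the full normalized profile $(p_j/p_{\max})_j$ determines the conditional, as the paper uses in the metric theorem. A counterexample must therefore equate only the scalar score, for instance by permuting the values of one coordinate.
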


\section{Conditional Inference, MAP-Alignment Uncertainty, and a Metric on Underlying Processes}

\subsection{Inference of Conditionals Using Conditional Learners}

Let $D=\{x^k\}_{k=1}^n$ be a training dataset in a finite product space. We
construct a family of full conditional models
$\Phi^{(n)}=\{\varphi^{i,n}(\cdot\mid x^{-i})\}_{i=1}^N$ by solving one
supervised prediction problem for each coordinate. In the experiments, these
models are implemented using conditional inference trees, whose unbiased
permutation-based split selection is described in
\citep{hothorn2006,hothorn2006party,strobl2007}.

The theory does not require a tree-specific convergence exponent. We instead
state the estimation assumption explicitly: for each coordinate,
\begin{equation}
\begin{aligned}
&\sup_{x^{-i},x^i}
\left|\varphi^{i,n}(x^i\mid x^{-i})-P_i(x^i\mid x^{-i})\right|
\leq \delta_{i,n},\\
&\delta_n:=\max_i\delta_{i,n}\xrightarrow{p}0.
\end{aligned}
\label{eq:conditional_error}
\end{equation}
A small $\varepsilon$-floor may be applied before renormalization to maintain
strict positivity on the effective support. When the learned kernels are
compatible, the resulting full conditional system identifies a unique joint
law by Theorem~\ref{thm:brook}.

\subsection{MAP-Alignment and Its Finite-Sample Uncertainty}

Given a trained conditional family $\Phi^{(n)}$, define
\cgather{
\upsilon_{\Phi^{(n)}}(x,i)
=
\frac{\varphi^{i,n}(x^i\mid x^{-i})}
{\max_{y\in\mathcal{X}^i}\varphi^{i,n}(y\mid x^{-i})}
\in[0,1].
\label{eq:upsilon_definition}
}
For an independent test dataset $D_{\mathrm{test}}=\{x_k\}_{k=1}^M$, let
\cgather{
Z_k:=\frac{1}{N}\sum_{i=1}^N
\upsilon_{\Phi^{(n)}}(x_k,i),
\qquad
\hat\Upsilon_{\Phi^{(n)}}(D_{\mathrm{test}})
:=\frac{1}{M}\sum_{k=1}^M Z_k.
\label{eq:upsilon_empirical}
}
The coordinates within a record may be dependent, but the row scores $Z_k$ are
independent and bounded in $[0,1]$ under i.i.d.\ test sampling. Hence
Hoeffding's inequality~\citep{hoeffding1963} gives
\cgather{
\Pr\left(
\left|\hat\Upsilon_{\Phi^{(n)}}-\mathbb{E}Z_1\right|>\epsilon
\right)
\leq 2\exp(-2M\epsilon^2).
\label{eq:upsilon_hoeffding}
}
Because each finite conditional vector has modal probability at least
$1/|\mathcal{X}^i|$, the normalization in
Eq.~\eqref{eq:upsilon_definition} is Lipschitz in the conditional vector. Combining test-sample variation with
Eq.~\eqref{eq:conditional_error} yields
\cgather{
\hat\Upsilon_{\Phi^{(n)}}(D_{\mathrm{test}})
=
\Upsilon_{\mathrm{oracle}}(P)
+O_p\!\left(M^{-1/2}+\delta_n\right).
\label{eq:upsilon_full_uncertainty}
}

\subsection{A Metric on Underlying Generative Processes}

For a strictly positive process $P$ with full conditionals $P_i$, define its population $\upsilon$-profile:
\cgather{
u_P(x,i)
=
\frac{
P_i(x^i\mid x^{-i})
}{
\max_{y\in\mathcal{X}^i} P_i(y\mid x^{-i})
}.
}
Fix a reference probability measure $\mu$ on $\mathcal{X}$ with full
support, and sample the coordinate index uniformly from
$\{1,\dots,N\}$. Define the distance
\cgather{
d_{\mu}(P,Q)
=
\mathbb{E}_{X\sim\mu}
\left[
\frac{1}{N}\sum_{i=1}^{N}
\lvert u_{P}(X,i) - u_{Q}(X,i)\rvert
\right].
\label{eq:true_metric}
}

\begin{theorem}
Let $\mu$ have full support on the finite product space $\mathcal{X}$.
Then $d_{\mu}$ is a metric on the set of strictly positive probability
distributions on $\mathcal{X}$.
\end{theorem}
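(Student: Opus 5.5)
Non-negativity and symmetry of $d_\mu$ are immediate, since it is an expectation of absolute differences. The triangle inequality also transfers directly: for each fixed $(x,i)$ we have $|u_P(x,i)-u_R(x,i)|\le |u_P(x,i)-u_Q(x,i)|+|u_Q(x,i)-u_R(x,i)|$, and averaging over $i$ and integrating against $\mu$ preserves this. Likewise $d_\mu(P,P)=0$. So $d_\mu$ is at least a pseudometric, and everything reduces to the identity of indiscernibles: $d_\mu(P,Q)=0$ must imply $P=Q$.

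Suppose $d_\mu(P,Q)=0$. The integrand is nonnegative, the space is finite, and $\mu(x)>0$ for every $x$. Hence $u_P(x,i)=u_Q(x,i)$ for every configuration $x$ and every coordinate $i$. The key step is to recover the full conditionals from the profile. Fix $i$ and a context $x^{-i}$. As $x^i$ ranges over $\mathcal{X}^i$, the profile gives the vector $r_P(j)=P_i(j\mid x^{-i})/\max_y P_i(y\mid x^{-i})$, that is, the conditional vector divided by its maximum. By strict positivity the maximum is positive, and we can write $P_i(j\mid x^{-i})=r_P(j)\,m_P$ with $m_P=\max_y P_i(y\mid x^{-i})>0$. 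Summing over $j$ gives $1=m_P\sum_j r_P(j)$, so $m_P=1/\sum_j r_P(j)$ and therefore
\[
P_i(j\mid x^{-i})=\frac{r_P(j)}{\sum_{k\in\mathcal{X}^i} r_P(k)}.
\]
The same formula holds for $Q$, and $r_P=r_Q$. Hence $P_i(\cdot\mid x^{-i})=Q_i(\cdot\mid x^{-i})$ for every $i$ and every $x^{-i}$.

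Finally, $P$ and $Q$ are strictly positive, genuine joint laws, so their full conditional systems are automatically compatible. Theorem~\ref{thm:brook} then gives $P=Q$. I expect the main obstacle to be the step that inverts the normalization. There one must check that dividing by the maximum loses only a positive scale factor, which normalization then pins down. This uses that the profile is known for all $x^i$ in each context, which the full support of $\mu$ guarantees. One must also note that compatibility is not an extra hypothesis here, because both conditional systems come from actual joint distributions. Finiteness of $d_\mu$ is trivial, since the integrand lies in $[0,1]$.
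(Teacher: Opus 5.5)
Your proposal is correct and follows the paper's proof essentially step for step: the pointwise triangle inequality, full support of $\mu$ forcing $u_P=u_Q$ at every $(x,i)$, recovery of each conditional by renormalizing the profile over $z\in\mathcal{X}^i$, and the conclusion via Theorem~\ref{thm:brook}. You also state explicitly that compatibility holds automatically because both conditional systems come from genuine joint laws, a small point the paper leaves implicit.
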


\begin{proof}
Nonnegativity and symmetry are immediate. The triangle inequality follows from the scalar inequality
$\lvert a-c\rvert \le \lvert a-b\rvert + \lvert b-c\rvert$ applied pointwise and averaged over coordinates and integrated with respect to $\mu$.
For identity of indiscernibles, if $d_{\mu}(P_1,P_2)=0$, then
$u_{P_1}(x,i)=u_{P_2}(x,i)$ for every $(x,i)$ because $\mathcal{X}$ is
finite and $\mu$ has full support. For each fixed
$(x^{-i},i)$, the conditional distribution is recovered from its normalized
profile by
$P_i(x^i\mid x^{-i})=u_P(x,i)/\sum_{z\in\mathcal{X}^i}u_P((z,x^{-i}),i)$.
Thus all full conditionals agree. Theorem~\ref{thm:brook} then implies $P_1=P_2$.
\end{proof}

Given empirical datasets $D_1,D_2$, train conditional generator families
$G_1,G_2$ yielding conditional families $\Phi^{(1)}$ and $\Phi^{(2)}$.
Independently draw a common evaluation sample
$E_{\mu}=\{x^{(k)}\}_{k=1}^{M}$ i.i.d.\ from the fixed reference measure
$\mu$. Define the empirical distance
\cgather{
\widehat d_{\mu}(D_1,D_2;E_{\mu})
=
\frac{1}{MN}
\sum_{k=1}^{M}
\sum_{i=1}^{N}
\left\lvert
u_{G_1}(x^{(k)},i)-u_{G_2}(x^{(k)},i)
\right\rvert.
\label{eq:empirical_metric}
}

For notational convenience, let $d_{\mu}(G_1,G_2)$ denote
Eq.~\eqref{eq:true_metric} evaluated using the two learned normalized profiles.

\begin{theorem}
Assume the conditional estimators satisfy \eqref{eq:conditional_error} with
uniform errors $\epsilon_1,\epsilon_2$. Then
\cgather{
\big|
\widehat d_{\mu}(D_1,D_2;E_{\mu})-d_{\mu}(P_1,P_2)
\big|
=
O_p\!\left(
M^{-1/2}
+
\epsilon_1
+
\epsilon_2
\right).
}
\end{theorem}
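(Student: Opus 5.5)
We split the error with the triangle inequality into a sampling term and an estimation term:
\[
\big|\widehat d_{\mu}(D_1,D_2;E_{\mu})-d_{\mu}(P_1,P_2)\big|
\le
\big|\widehat d_{\mu}(D_1,D_2;E_{\mu})-d_{\mu}(G_1,G_2)\big|
+
\big|d_{\mu}(G_1,G_2)-d_{\mu}(P_1,P_2)\big|.
\]
Here $d_{\mu}(G_1,G_2)$ is the population distance computed from the two learned normalized profiles. The first term is controlled by conditioning on the training data. The second is a deterministic perturbation bound that uses the Lipschitz property of the normalization already noted before Eq.~\eqref{eq:upsilon_full_uncertainty}.

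\textbf{Sampling term.} Conditionally on $D_1,D_2$, and hence on the fitted families $\Phi^{(1)},\Phi^{(2)}$, consider the row scores
\[
W_k:=\frac{1}{N}\sum_{i=1}^N\big|u_{G_1}(x^{(k)},i)-u_{G_2}(x^{(k)},i)\big|.
\]
These are i.i.d.\ because $E_\mu$ is drawn independently of the training data. They lie in $[0,1]$ because each profile takes values in $[0,1]$. Their conditional mean is exactly $d_\mu(G_1,G_2)$. Hoeffding's inequality, applied as in Eq.~\eqref{eq:upsilon_hoeffding}, gives the conditional tail bound $2\exp(-2M\epsilon^2)$. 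This bound does not depend on the conditioning, so it holds unconditionally, and the first term is $O_p(M^{-1/2})$.

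\textbf{Estimation term.} First, $\big||a-b|-|c-d|\big|\le|a-c|+|b-d|$ pointwise. Averaging over $i$ and integrating against $\mu$ then gives
\[
\big|d_{\mu}(G_1,G_2)-d_{\mu}(P_1,P_2)\big|\le \sup_{x,i}|u_{G_1}(x,i)-u_{P_1}(x,i)|+\sup_{x,i}|u_{G_2}(x,i)-u_{P_2}(x,i)|.
\]
It remains to bound $|u_G-u_P|$ by the uniform conditional error $\epsilon_j$ from \eqref{eq:conditional_error}.

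\textbf{Lipschitz lemma (main obstacle).} Fix $(i,x^{-i})$. Write $p$ for the true conditional vector and $q$ for the learned one, with $\|p-q\|_\infty\le\epsilon$. Then $|\max_y q_y-\max_y p_y|\le\epsilon$. Both maxima are at least $1/|\mathcal{X}^i|$, because every probability vector on $|\mathcal{X}^i|$ points has modal mass at least $1/|\mathcal{X}^i|$; this requires the learned kernels to be genuine probability vectors, which the $\varepsilon$-floor-and-renormalize step guarantees. We then compute
\[
\Big|\frac{q_x}{\max q}-\frac{p_x}{\max p}\Big|
\le \frac{|q_x-p_x|}{\max q}+p_x\frac{|\max p-\max q|}{\max p\,\max q}
\le 2|\mathcal{X}^i|\,\epsilon,
\]
using $p_x\le\max p$ in the second summand. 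Hence $\sup|u_{G_j}-u_{P_j}|\le 2K\epsilon_j$ with $K:=\max_i|\mathcal{X}^i|$, a constant fixed by the finite state space.

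The delicate point is that this constant may be large but stays fixed as $M$ grows and the errors vanish. If $\epsilon_j$ are themselves random, as with $\delta_n\xrightarrow{p}0$, the same bound holds on the event where \eqref{eq:conditional_error} holds, giving $O_p(\epsilon_1+\epsilon_2)$. Combining the sampling and estimation terms yields the stated rate $O_p\!\left(M^{-1/2}+\epsilon_1+\epsilon_2\right)$.
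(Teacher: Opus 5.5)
Your proposal is correct and follows the paper's proof. You use the same triangle-inequality split through $d_{\mu}(G_1,G_2)$, apply Hoeffding's inequality to the i.i.d.\ bounded row scores (valid because $E_\mu$ is independent of the training data), and bound the estimation term by Lipschitz continuity of $\varphi\mapsto\varphi/\max_y\varphi(y)$ on finite simplices together with \eqref{eq:conditional_error}. You also make explicit what the paper leaves implicit: the conditioning argument for Hoeffding, the reverse-triangle step, and the Lipschitz constant $2\max_i|\mathcal{X}^i|$.
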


\begin{proof}
By the triangle inequality,
\calign{
\left\lvert \widehat d_{\mu}-d_{\mu}(P_1,P_2)\right\rvert
&\leq
\left\lvert \widehat d_{\mu}-d_{\mu}(G_1,G_2)\right\rvert\\
&\quad+
\left\lvert d_{\mu}(G_1,G_2)-d_{\mu}(P_1,P_2)\right\rvert.
}
For the first term, the row-level quantities
\cgather{
Z_k=\frac{1}{N}\sum_{i=1}^{N}
\left\lvert u_{G_1}(x^{(k)},i)-u_{G_2}(x^{(k)},i)\right\rvert
}
are i.i.d.\ and bounded in $[0,1]$ because $E_{\mu}$ is independent of the
training datasets. Hoeffding's inequality therefore gives the
$O_p(M^{-1/2})$ term. The second term is controlled by the Lipschitz
continuity of $u(\varphi)=\varphi/\max_y\varphi(y)$ on finite
probability simplices, together with \eqref{eq:conditional_error}.
\end{proof}
\subsection{One-Sided MAP-Alignment Comparison Against Held-Out Real Data}

Let $D_{\mathrm{fit}}$ be an evaluator-training dataset independent of both
the real reference dataset $D_{\mathrm{ref}}$ and the synthetic dataset
$D_{\mathrm{syn}}$. Fit a conditional family $\Phi^{(\mathrm{real})}$ on
$D_{\mathrm{fit}}$, and evaluate both the real reference records and synthetic
records under that same fixed conditional family:
\cgather{
\Upsilon_{\mathrm{ref}}
=\hat\Upsilon_{\Phi^{(\mathrm{real})}}(D_{\mathrm{ref}}),
\qquad
\Upsilon_{\mathrm{syn}}
=\hat\Upsilon_{\Phi^{(\mathrm{real})}}(D_{\mathrm{syn}}).
}
We report the signed MAP-alignment gap and retention ratio
\cgather{
\Delta\Upsilon:=\Upsilon_{\mathrm{syn}}-\Upsilon_{\mathrm{ref}},
\qquad
R_{\Upsilon}:=\frac{\Upsilon_{\mathrm{syn}}}{\Upsilon_{\mathrm{ref}}}.
\label{eq:fidelity_gap_retention}
}
A generator matching the reference in average MAP alignment should have
$\Delta\Upsilon$ near zero, or equivalently $R_{\Upsilon}$ near one. Negative
gaps indicate lower average MAP alignment;
positive gaps do not by themselves establish superiority because finite-sample
estimation and mode concentration can increase the normalized score. In
particular, a generator that overrepresents conditional modal values can score
above the original-data control; $\Upsilon$ is therefore a comparative
diagnostic rather than a stand-alone objective to maximize. When an
independent evaluator-training dataset is unavailable, cross-fitting may instead
be used to avoid evaluating real records under a model trained on those same
records.

\subsection{Consistency of the Dataset Distance as a Metric on Processes}
\begin{theorem}[Convergence to the Population Metric]
Let $D_1$ and $D_2$ be drawn i.i.d.\ from strictly positive processes
$P_1$ and $P_2$, and let $E_{\mu}$ be an independent i.i.d.\ sample from the
fixed full-support reference measure $\mu$. Suppose the learned conditional kernels $\widehat P_{k,i}$ satisfy
\calign{
&\max_{1\leq i\leq N}\sup_{x^{-i},x^i}
\left\lvert
\widehat P_{k,i}(x^i\mid x^{-i})-P_{k,i}(x^i\mid x^{-i})
\right\rvert\\
&\hspace{5em}\leq \epsilon_k,
\qquad k\in\{1,2\}.
}
with $\epsilon_k\to 0$ in probability as $|D_k|\to\infty$, and suppose
$M=|E_{\mu}|\to\infty$. Then
\cgather{
\widehat d_{\mu}(D_1,D_2;E_{\mu})
\xrightarrow{p}
d_{\mu}(P_1,P_2).
\label{eq:limit_metric}
}
\end{theorem}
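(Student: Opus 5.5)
The plan is to split the error into a Monte Carlo term, controlled by Hoeffding's inequality conditionally on the training data, and an estimation term, controlled by a deterministic Lipschitz bound for the normalization map $u(\varphi)=\varphi/\max_y\varphi(y)$. This is the decomposition used in the preceding finite-sample theorem. Write $d_\mu(G_1,G_2)$ for the population distance \eqref{eq:true_metric} computed with the learned normalized profiles $u_{G_k}(x,i)=\widehat P_{k,i}(x^i\mid x^{-i})/\max_y \widehat P_{k,i}(y\mid x^{-i})$. Then
\[
\bigl|\widehat d_\mu-d_\mu(P_1,P_2)\bigr|
\le \bigl|\widehat d_\mu-d_\mu(G_1,G_2)\bigr|
+\bigl|d_\mu(G_1,G_2)-d_\mu(P_1,P_2)\bigr|,
\]
and it suffices to show that each term tends to zero in probability.

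\textbf{First term.} Condition on $(D_1,D_2)$. Since $E_\mu$ is independent of the training data, the row scores $Z_k=\frac1N\sum_i|u_{G_1}(x^{(k)},i)-u_{G_2}(x^{(k)},i)|$ are conditionally i.i.d. in $[0,1]$, with conditional mean exactly $d_\mu(G_1,G_2)$. Hoeffding's inequality gives
\[
\Pr\bigl(|\widehat d_\mu-d_\mu(G_1,G_2)|>\epsilon \,\big|\, D_1,D_2\bigr)\le 2e^{-2M\epsilon^2}.
\]
The bound is uniform in the conditioning, so taking expectations yields the same unconditional bound, which vanishes as $M\to\infty$.

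\textbf{Second term.} By the reverse triangle inequality applied pointwise, this term is at most $\mathbb{E}_\mu\frac1N\sum_i\bigl(|u_{G_1}-u_{P_1}|+|u_{G_2}-u_{P_2}|\bigr)$. It therefore suffices to bound $\sup_{x,i}|u_{\widehat P_{k}}(x,i)-u_{P_k}(x,i)|$ by $C\epsilon_k$. Fix $(i,x^{-i})$, and let $p$ and $q$ be the true and learned conditional vectors, with $\|p-q\|_\infty\le\epsilon$. The max map is $1$-Lipschitz in sup-norm, so $|\max p-\max q|\le\epsilon$. Both maxima are at least $1/|\mathcal X^i|$, because each vector is a probability vector on $|\mathcal X^i|$ points. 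Then
\[
\Bigl|\frac{q_j}{\max q}-\frac{p_j}{\max p}\Bigr|
\le \frac{|q_j-p_j|}{\max q}+p_j\frac{|\max p-\max q|}{\max p\,\max q}
\le 2|\mathcal X^i|\,\epsilon .
\]
Here I use $p_j\le\max p$ and $\max q\ge 1/|\mathcal X^i|$. Consequently the second term is at most $2K(\epsilon_1+\epsilon_2)$ with $K=\max_i|\mathcal X^i|$, and this tends to zero in probability by hypothesis.

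\textbf{Conclusion and main obstacle.} Adding the two terms, for any $\eta>0$ we get
\[
\Pr\bigl(|\widehat d_\mu-d_\mu(P_1,P_2)|>2\eta\bigr)\le 2e^{-2M\eta^2}+\Pr\bigl(2K(\epsilon_1+\epsilon_2)>\eta\bigr)\to0 .
\]
This establishes \eqref{eq:limit_metric}. The main obstacle is the Lipschitz step, because the normalization divides by the mode. The crucial point is that the learned kernels are genuine probability vectors (after any $\varepsilon$-floor and renormalization), so their maxima are bounded below by $1/|\mathcal X^i|$ deterministically. Without that, an estimator could in principle have a vanishing mode and break uniform control. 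If the learned kernels were not normalized, I would instead use that $\max q\ge \max p-\epsilon\ge 1/|\mathcal X^i|-\epsilon$, which holds with probability tending to one.
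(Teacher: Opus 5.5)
Your proposal is correct and follows essentially the paper's route. The paper's proof just invokes the preceding theorem's bound $O_p(M^{-1/2}+\epsilon_1+\epsilon_2)$, which uses the same triangle-inequality split into a Hoeffding-controlled Monte Carlo term and a term controlled by Lipschitz continuity of $\varphi\mapsto\varphi/\max_y\varphi(y)$ on the simplex; you have inlined that argument, conditioning on $(D_1,D_2)$ to justify Hoeffding and deriving the explicit Lipschitz constant $2|\mathcal X^i|$ from $\max_y\varphi(y)\ge 1/|\mathcal X^i|$.
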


\begin{proof}
From the previous theorem we have
\cgather{
\big|
\widehat d_{\mu}(D_1,D_2;E_{\mu})-d_{\mu}(P_1,P_2)
\big|
=
O_p\!\left(
M^{-1/2}
+
\epsilon_1
+
\epsilon_2
\right).
}
As $M\to\infty$ and $|D_k|\to\infty$, both $M^{-1/2}$ and $\epsilon_k$ converge to zero in probability, which implies \eqref{eq:limit_metric}.
\end{proof}

\begin{corollary}
If $P_1=P_2$, then
\cgather{
\widehat d_{\mu}(D_1,D_2;E_{\mu})\xrightarrow{p}0.
}
If $P_1\neq P_2$, then
\cgather{
\widehat d_{\mu}(D_1,D_2;E_{\mu})
\xrightarrow{p}d_{\mu}(P_1,P_2)>0.
}
\end{corollary}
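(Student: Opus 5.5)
The corollary follows from two earlier results: the convergence theorem immediately above, which gives $\widehat d_{\mu}(D_1,D_2;E_{\mu})\xrightarrow{p}d_{\mu}(P_1,P_2)$, and the metric theorem, which gives the sign of the limit. The plan is to treat the two cases $P_1=P_2$ and $P_1\neq P_2$ separately. The hypotheses are inherited unchanged: $D_1,D_2$ are i.i.d.\ from strictly positive $P_1,P_2$, the uniform conditional errors satisfy $\epsilon_k\xrightarrow{p}0$, and $M=|E_{\mu}|\to\infty$ with $\mu$ of full support.

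For the first case, I apply the convergence theorem to get $\widehat d_{\mu}\xrightarrow{p}d_{\mu}(P_1,P_1)$. The right-hand side is zero directly from definition \eqref{eq:true_metric}, since $u_{P_1}(x,i)-u_{P_1}(x,i)=0$ pointwise. This gives $\widehat d_{\mu}\xrightarrow{p}0$.

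For the second case, the limit $d_{\mu}(P_1,P_2)$ again comes from the convergence theorem, so only strict positivity of the limit remains. I argue by contraposition using the identity-of-indiscernibles part of the metric theorem. Suppose $d_{\mu}(P_1,P_2)=0$. Because $\mathcal{X}$ is finite and $\mu$ has full support, every summand $\mu(x)\,|u_{P_1}(x,i)-u_{P_2}(x,i)|$ is nonnegative and must vanish, so the profiles agree everywhere. Renormalizing each profile, $P_i(x^i\mid x^{-i})=u_P(x,i)/\sum_z u_P((z,x^{-i}),i)$, shows that all full conditionals agree. Theorem~\ref{thm:brook} then gives $P_1=P_2$, which is a contradiction. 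Hence $d_{\mu}(P_1,P_2)>0$.

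There is no genuine obstacle; the only care needed is making sure the strict positivity and compatibility hypotheses of Theorem~\ref{thm:brook} are met. They are, because $P_1$ and $P_2$ are themselves joint distributions (so their conditionals are automatically compatible) and are strictly positive by assumption. It also helps to note that $d_{\mu}(P_1,P_2)$ is a deterministic constant, so no probabilistic argument is involved in establishing its positivity.
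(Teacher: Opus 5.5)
Your proposal is correct and matches the paper's route: the paper gives no separate proof, treating the corollary as an immediate consequence of the convergence theorem together with $d_\mu(P,P)=0$ and the identity-of-indiscernibles part of the metric theorem. Your positivity argument (full support of $\mu$, recovering each conditional by renormalizing its profile, then Theorem~\ref{thm:brook}) repeats the paper's metric proof, and you are right that compatibility holds automatically because $P_1,P_2$ are genuine strictly positive joint laws.
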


\subsection{Cross-Fitted Evaluation When Independent Training Data Are Unavailable}

Algorithm~\ref{alg:crossfit} gives a general cross-fitted implementation for
settings in which a separate evaluator-training dataset is unavailable. All
three empirical applications reported here used fixed conditional evaluators
trained on data disjoint from the corresponding Original (control) and
synthetic records. Algorithm~\ref{alg:crossfit} was therefore not invoked in the
reported experiments and is provided for applications without an independent
evaluator-training partition.

\begin{algorithm}[t]
\caption{Cross-fitted conditional MAP alignment and row novelty}
\label{alg:crossfit}
\begin{algorithmic}[1]
\Require Real data $D_{\mathrm{real}}$; synthetic data $D_{\mathrm{syn}}$; $K$ folds; conditional learner $\mathsf{Learn}$; level $\alpha$
\Ensure $\hat\Upsilon_{\mathrm{ref}}$, $\hat\Upsilon_{\mathrm{syn}}$, $\Delta\hat\Upsilon$, $R_{\Upsilon}$, and row novelty
\State Partition $D_{\mathrm{real}}$ into folds $F_1,\dots,F_K$
\For{$k=1,\dots,K$}
  \State Fit $\Phi_k\leftarrow\mathsf{Learn}(D_{\mathrm{real}}\setminus F_k)$
  \State Score each $x\in F_k$ using $\Phi_k$ and store its row-average MAP alignment
  \State Score an assigned subset of $D_{\mathrm{syn}}$ using the same $\Phi_k$
\EndFor
\State Average the stored real and synthetic row scores to obtain $\hat\Upsilon_{\mathrm{ref}}$ and $\hat\Upsilon_{\mathrm{syn}}$
\State Set $\Delta\hat\Upsilon\leftarrow\hat\Upsilon_{\mathrm{syn}}-\hat\Upsilon_{\mathrm{ref}}$ and $R_{\Upsilon}\leftarrow\hat\Upsilon_{\mathrm{syn}}/\hat\Upsilon_{\mathrm{ref}}$
\State For each synthetic row, compute nearest-real similarity $\eta_j$ and row novelty $1-m^{-1}\sum_j\eta_j$
\State Use a paired bootstrap over independent datasets or survey waves for the final interval
\State \Return all summary statistics
\end{algorithmic}
\end{algorithm}

For a fixed fitted conditional family and $M$ independent evaluated rows, the
row-level Hoeffding radius at confidence $1-\delta$ is
\begin{equation}
r_M(\delta)=\sqrt{\frac{1}{2M}\log\frac{2}{\delta}}.
\end{equation}
At $\delta=0.05$, the radii are $0.1358$ for $M=100$ and $0.0429$ for
$M=1000$. Conditional-estimation uncertainty must be handled separately through
data splitting, repeated fitting, or a higher-level bootstrap.

\section{Low-Order Moment Matching is Insufficient: Two Illustrative Examples}\label{sec:examples}

Two examples in $\mathbb{R}^3$ show that identical first- and second-order
moments do not imply similar conditional structure. We use the
continuous-density analogue of the finite-state score, replacing the maximum by
an essential supremum. A formal extension to continuous and mixed-type data is
left for future work.

\subsection{Example 1: Uniform vs.\ Gaussian with Identical Moments}

Consider the following two distributions on $\mathbb{R}^3$:
\cgather{
X^{(U)} \sim \mathrm{Uniform}([-\sqrt{3},\sqrt{3}]^3),
\qquad
X^{(G)} \sim \mathcal{N}_3(0,I_3).
}

Both satisfy
\cgather{
\mathbb{E}[X^{(U)}]
=
\mathbb{E}[X^{(G)}]
=
0,
\qquad
\mathrm{Cov}(X^{(U)})
=
\mathrm{Cov}(X^{(G)})
=
I_3.
}
Hence all marginal means, variances, and the full covariance matrix
coincide.

Yet their conditional structures differ sharply. For $X^{(U)}$, each
coordinate has a flat conditional density on $[-\sqrt{3},\sqrt{3}]$,
yielding
\cgather{
\upsilon(X^{(U)}, i) = 1
\quad\text{for almost every sample}.
}
For $X^{(G)}$, the $i$th coordinate conditional is
$X^{(G)}_i \mid X^{(G)}_{-i} \sim \mathcal{N}(0,1)$, giving
\cgather{
\upsilon(X^{(G)}, i)
=
\exp\!\left(-\tfrac12 X^{(G)}_i{}^2\right),
}
with mean exactly $1/\sqrt{2}\approx0.7071$. Thus, although the datasets match in mean
and covariance, the MAP-alignment values differ substantially:
\cgather{
\Upsilon(X^{(U)}) = 1,
\qquad
\Upsilon(X^{(G)}) = \frac{1}{\sqrt{2}}.
}
Low-order moments fail to detect this discrepancy.

\subsection{Example 2: Matching Non-Identity Covariances with Distinct Conditional Structure}

To demonstrate that the limitation persists even when the covariance
matrix is nontrivial, apply the same invertible linear map
\cgather{
L
=
\begin{pmatrix}
1 & \rho & 0\\
0 & 1     & \rho\\
0 & 0     & 1
\end{pmatrix},
\qquad 0 < \rho < 1,
}
to both datasets and define
\cgather{
Y^{(U)} = L X^{(U)},
\qquad
Y^{(G)} = L X^{(G)}.
}

Both transformed datasets have the \emph{same} non-identity covariance
matrix
\cgather{
\mathrm{Cov}(Y^{(U)})
=
\mathrm{Cov}(Y^{(G)})
=
\Sigma
=
L L^{\!\top},
}
and share identical columnwise means and variances.

Their conditional distributions nevertheless differ. Since $Y^{(U)}$ is the
image of a uniform cube under a linear shear, each
conditional $Y^{i,(U)} \mid Y^{-i,(U)}$ is uniform on a finite
interval (given by the intersection of a line with a parallelepiped),
implying
\cgather{
\upsilon(Y^{(U)}, i) = 1
\quad\text{for almost every sample}.
}

Conversely, $Y^{(G)} \sim \mathcal{N}_3(0,\Sigma)$ has
linear--Gaussian conditionals. If
$Y^{i,(G)} \mid Y^{-i,(G)} \sim \mathcal{N}(m_i(y^{-i}), \sigma_i^2)$,
then
\cgather{
\upsilon(Y^{(G)}, i)
=
\exp\!\left(
-\tfrac12 (y^i - m_i(y^{-i}))^2 / \sigma_i^2
\right),
}
and the standardized conditional residual is $\mathcal{N}(0,1)$, so its mean is exactly $1/\sqrt{2}\approx0.7071$.

Thus, even though $Y^{(U)}$ and $Y^{(G)}$ agree in all first- and
second-order statistics, their conditional behavior differs markedly,
and the MAP-alignment statistic again reveals the discrepancy:
\cgather{
\Upsilon(Y^{(U)}) = 1,
\qquad
\Upsilon(Y^{(G)}) = \frac{1}{\sqrt{2}}.
}

\subsection{Implications}

Thus, matching means and covariance does not determine conditional structure.
MAP alignment detects the discrepancy in both examples.

\section{Statistical diagnostics for synthetic samples}\label{sec:applications}

\subsection{Row-Level Memorization Diagnostic}

MAP alignment does not detect row reuse: a dataset can score highly while
copying observed records. We therefore add a row-level novelty diagnostic.

Let $D_{\mathrm{real}}=\{x^{(1)},\ldots,x^{(n)}\}$ and
$D_{\mathrm{syn}}=\{y^{(1)},\ldots,y^{(m)}\}$ be real and synthetic datasets
defined over the same categorical (or discretized) feature space. Using a
concatenated one-hot embedding $f(\cdot)$, we define cosine similarity
\cgather{
s(u,v)
\;=\;
\frac{\langle f(u), f(v)\rangle}
{\|f(u)\|_2\,\|f(v)\|_2},
} which, for categorical data, reduces to the fraction of coordinates on which two
records agree.

For each synthetic record $y^{(j)}$, we define its nearest-real similarity
\cgather{
\eta_j
\;=\;
\max_{1 \le i \le n}
s\!\left(x^{(i)}, y^{(j)}\right).
}
We summarize row-level reuse via the \emph{lack-of-novelty score}
\cgather{
\mathcal{N}(D_{\mathrm{syn}})
\;=\;
\frac{1}{m}\sum_{j=1}^m \eta_j,
}
with larger values indicating lower novelty. In particular,
$\mathcal{N}=1$ corresponds to exact row reuse (e.g., bootstrap resampling),
while smaller values indicate increasing deviation from any individual real
record. Tail statistics of $\{\eta_j\}$ (e.g., the 95th percentile)
capture worst-case near-copying behavior.

The diagnostics are complementary: $\Upsilon(D)$ measures conditional
alignment, whereas $\mathcal{N}(D_{\mathrm{syn}})$ measures proximity to
specific real records. Nearest-row novelty is a memorization diagnostic, not a
formal privacy guarantee; disclosure risk requires separate membership, linkage,
or attribute-inference analysis. The target is $\Delta\Upsilon\to0$
(equivalently $R_{\Upsilon}\to1$) with $\mathcal{N}$ well below one.

\subsection{Practical Use of the Diagnostics}

The diagnostics answer different questions and should be selected accordingly.
MAP alignment is appropriate when the primary concern is whether synthetic
records preserve the conditional response structure represented by a fixed
real-data evaluator. To keep reporting comparable, all empirical applications
below use the same summaries: mean $\Upsilon$, the signed gap from the original
control, the retention ratio, nearest-real similarity, and row novelty. The GSS
analysis additionally displays the zero-anchored excess-alignment fraction
$\Lambda$ as a secondary normalization because its independent-marginal null is
well separated from the control.

Nearest-real similarity is appropriate when the concern is copying or near reuse
of individual records. It should be reported together with exact duplicate rates
and upper-tail summaries, such as the 95th or 99th percentile, rather than only
a mean. It is not a substitute for formal disclosure-risk analysis.

Neither diagnostic replaces marginal checks or application-specific validation.
When the synthetic data will support a particular estimand, classifier, or policy
simulation, MAP alignment and novelty should be combined with relevant marginal,
task-based, and, where necessary, privacy metrics. A single scalar score is not
sufficient for all uses of synthetic multivariate data.

\section{Empirical studies}
\label{sec:empirical_results}

The empirical analyses examine whether the proposed diagnostics separate the three inferentially distinct departures identified above. The data sources were selected to provide different challenges to synthetic-data validation rather than exchangeable replications of a common experiment. NSHAP represents heterogeneous human-health and social measurements, for which valid synthesis must retain clinically and behaviorally structured dependence while accommodating substantial nonmodal variation. Influenza B sequences represent genomic surveillance, where strong local dependence and naturally recurring sequence patterns make conditional alignment difficult to distinguish from record reuse. The GSS provides a repeated-cross-sectional stress test with substantial variation in dimension and survey content across waves. Analyses and uncertainty summaries were therefore constructed within each source, and no pooled inferential comparison was formed across applications. The data sources and corresponding analysis units are summarized in Table~\ref{tab:data_blocks}.

\begin{table}[H]
\centering
\caption{Data sources and analysis units used in the empirical evaluation.}
\label{tab:data_blocks}
\small
\setlength{\tabcolsep}{3.5pt}
\resizebox{\textwidth}{!}{%
\begin{tabular}{lcccc}
\toprule
Data source & Analysis units & Coverage & Variables per unit & Records evaluated per condition \\
\midrule
General Social Survey & 34 waves & 1972--2024 & 197--1,273 & 1,000 \\
Influenza B surveillance & 7 overlapping windows & 2016--2023 & 220--374 & 1,000 \\
NSHAP Round 2 & 8 archived analyses & 2010--2011 & 861--862 & 1,000 \\
\bottomrule
\end{tabular}%
}
\end{table}

\subsection{Common evaluation protocol}

Within each analysis unit, a fixed conditional evaluator was fitted using records that were not used as the original-data control or as synthetic evaluation records. We compared the Large Science Model (LSM), a restricted Chow--Liu hybrid \citep{chow1968}, CTGAN \citep{xu2019ctgan}, an independent-marginal baseline, and an original-data control. The Chow--Liu comparator fitted a maximum-weight dependence tree to a bounded subset of lower-cardinality coordinates and sampled the remaining coordinates from their empirical marginal distributions. CTGAN used SDV single-table metadata inference for categorical or discretely represented variables.

Each synthetic condition generated 1,000 records. The original-data control was a simple random sample of 1,000 source records, drawn without replacement when the source contained at least 1,000 records and with replacement otherwise. All conditions within an analysis unit were evaluated at the same sample size. Missing values were not statistically imputed or recoded as substantive response levels. A missing target coordinate was excluded from that record's MAP-alignment average, while missing predictor coordinates remained unspecified to the conditional learner. For nearest-record similarity, missing entries were represented by a common marker only for one-hot encoding.

For every application, we report mean MAP alignment, the signed difference from the original-data control, the retention ratio, mean nearest-record similarity, and row novelty. The last quantity is one minus mean nearest-record similarity. Intervals for GSS are nonparametric wave-bootstrap intervals. The influenza B windows overlap, and the NSHAP runs share a common data source; intervals for those analyses are therefore descriptive run-resampling summaries rather than population-level confidence intervals.

\subsection{General Social Survey}
\label{sec:gss_results}

The GSS is a repeated cross-sectional survey of adults in the United States \citep{gss2024}. We used it as a heterogeneous high-dimensional stress test because the number and composition of measured variables change substantially across waves, while the records retain complex conditional relationships among demographic characteristics, experiences, and attitudes. We analyzed 34 waves collected between 1972 and 2024, comprising 71,667 respondent records and 197--1,273 modeled variables per wave. Within each wave, 50 percent of respondents were used to fit the wave-specific LSM and conditional evaluator; the remaining 50 percent were reserved for comparator construction, the original-data control, and evaluation. The characteristics of the GSS analysis are summarized in Table~\ref{tab:gss_data}.

\begin{table}[H]
\centering
\caption{Characteristics of the General Social Survey analysis.}
\label{tab:gss_data}
\small
\setlength{\tabcolsep}{3.5pt}
\begin{tabular}{@{}lr@{}}
\toprule
Characteristic & Value \\
\midrule
Analyzed waves & 34 \\
Survey years & 1972--2024 \\
Respondent records & 71,667 \\
Respondents per wave & 1,372--4,510 \\
Evaluator and LSM fitting partition & 50 percent per wave \\
Held-out analysis partition & 50 percent per wave \\
Modeled variables per wave & 197--1,273 \\
Evaluated records per condition & 1,000 \\
\bottomrule
\end{tabular}
\end{table}

The aggregate GSS benchmark results across the 34 survey waves are summarized in Table~\ref{tab:gss_benchmark}.

\begin{table}[H]
\centering
\caption{General Social Survey results across 34 waves. The signed gap is the generator mean MAP alignment minus the original-data control mean. Retention is their ratio, and row novelty is one minus mean nearest-record similarity. Intervals are nonparametric 95 percent wave-bootstrap intervals.}
\label{tab:gss_benchmark}
\small
\setlength{\tabcolsep}{3.2pt}
\resizebox{\textwidth}{!}{%
\begin{tabular}{lccccc}
\toprule
Generator & Mean MAP alignment & Signed gap (95 percent interval) & Retention & Nearest-record similarity & Row novelty \\
\midrule
LSM & 0.8032 & 0.0027 $[-0.0004,0.0062]$ & 1.0034 & 0.3800 & 0.6200 \\
Chow--Liu hybrid & 0.7872 & $-0.0132$ $[-0.0159,-0.0106]$ & 0.9834 & 0.6560 & 0.3440 \\
CTGAN & 0.7630 & $-0.0375$ $[-0.0415,-0.0332]$ & 0.9530 & 0.5665 & 0.4335 \\
Independent baseline & 0.7567 & $-0.0437$ $[-0.0483,-0.0390]$ & 0.9451 & 0.5619 & 0.4381 \\
Original-data control & 0.8004 & 0.0000 & 1.0000 & 1.0000 & 0.0000 \\
\bottomrule
\end{tabular}%
}
\end{table}

As shown in Table~\ref{tab:gss_benchmark}, the mean MAP alignment for LSM was 0.8032, compared with 0.8004 for the original-data control. The mean signed difference was 0.0027, with a 95 percent wave-bootstrap interval from -0.0004 to 0.0062, and the retention ratio was 1.0034. The interval includes zero, indicating that the mean LSM score was compatible with the control at the resolution of the wave-level analysis. This comparison does not establish equality of the complete conditional systems. The independent-marginal baseline retained 0.9451 of the control score, indicating a systematic reduction in alignment after cross-coordinate dependence was removed.

For the GSS analysis only, we also display a zero-anchored normalization relative to the independent-marginal baseline,
\begin{equation}
\Lambda_g =
\frac{\bar{\Upsilon}_g-\bar{\Upsilon}_{\mathrm{ind}}}
{\bar{\Upsilon}_{\mathrm{control}}-\bar{\Upsilon}_{\mathrm{ind}}}.
\label{eq:excess_alignment}
\end{equation}
The LSM estimate was 1.062, with a wave-bootstrap interval from 0.992 to 1.154; the interval includes the control anchor of one. LSM row novelty was 0.6200, compared with 0.3440 for Chow--Liu, 0.4335 for CTGAN, and 0.4381 for the independent baseline. Figure~\ref{fig:gss_benchmark_four_panel} shows the corresponding wave-level behavior: LSM signed gaps remain concentrated near zero and its retention remains close to one across survey years, whereas the other synthetic generators are generally shifted below the original-data control. The figure also shows that LSM maintains the highest row novelty across waves.

\begin{figure}[t]
  \centering
  \tikzexternalenable
  \tikzsetnextfilename{figres}
  \iftikzX
  \input{Figures/figres}
  \else
  \includegraphics[width=1.65\textwidth]{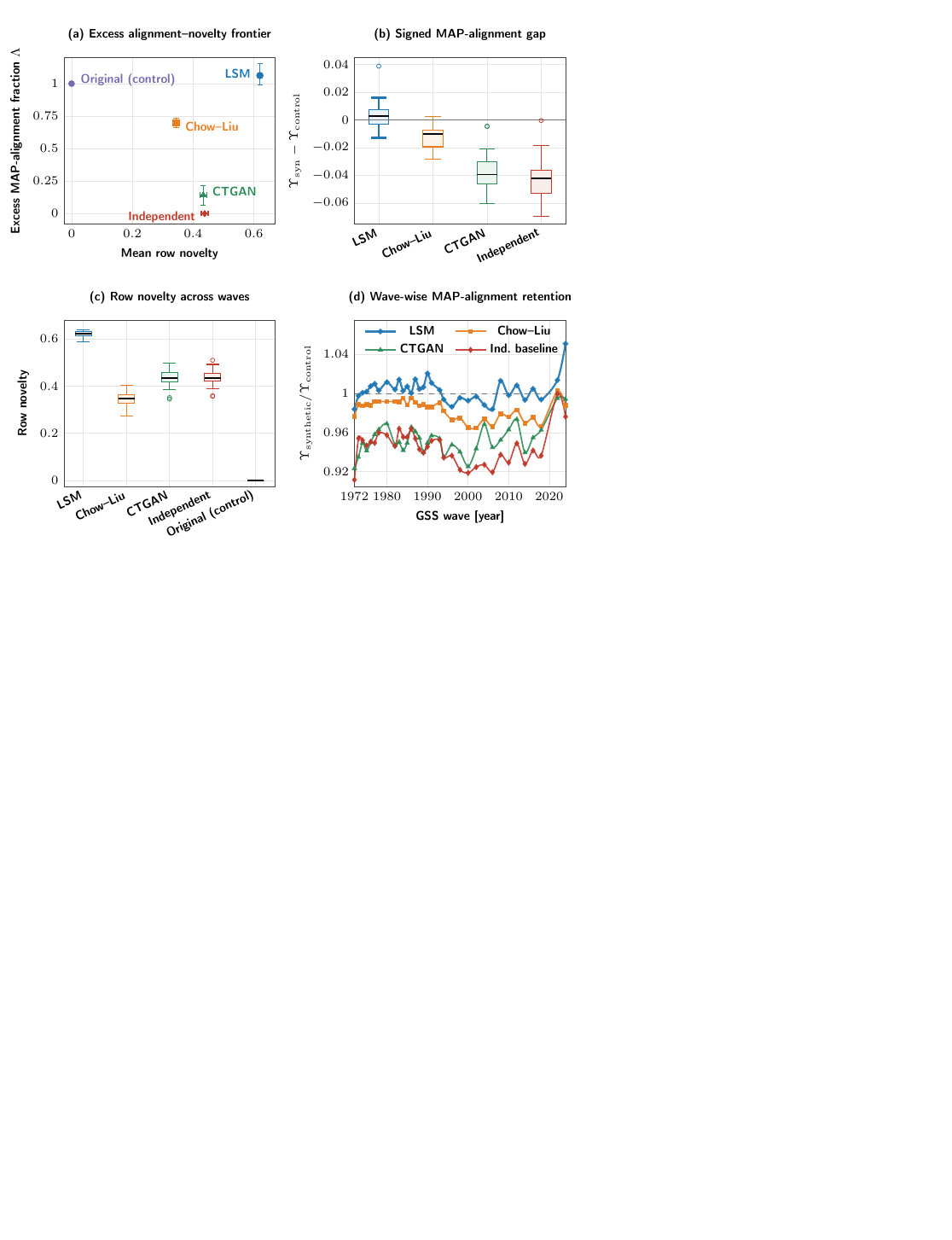}
     \fi

  \vspace{-540pt}
  
\caption{General Social Survey benchmark. Panel (a) shows excess MAP alignment relative to the independent-marginal baseline against mean row novelty. Panel (b) shows wave-level signed MAP-alignment gaps, panel (c) shows wave-level row novelty, and panel (d) shows wave-level retention. Intervals in panel (a) are nonparametric wave-bootstrap intervals. ``Chow--Liu'' denotes the restricted Chow--Liu hybrid.}
\label{fig:gss_benchmark_four_panel}
\end{figure}

\subsection{Influenza B genomic surveillance}
\label{sec:infb_results}

The influenza B analysis comprised seven overlapping sequence windows spanning 2016--2023. After alignment and preprocessing, individual sequences were represented by 220--374 categorical loci, depending on the window. This application tests a central ambiguity in surveillance synthesis: strong conditional agreement may reflect preservation of genomic dependence, but high similarity can also arise from repeating previously observed sequences. The windows overlap in time and are interpreted as temporally local surveillance analyses rather than independent population replicates. Results across the seven influenza B surveillance windows are summarized in Table~\ref{tab:infb_results}.

\begin{table}[!htp]
\centering
\caption{Influenza B results across seven overlapping genomic-surveillance windows. The output definitions are the same as in Table~\ref{tab:gss_benchmark}. Signed-gap intervals are descriptive run-resampling intervals.}
\label{tab:infb_results}
\small
\setlength{\tabcolsep}{3.2pt}
\resizebox{\textwidth}{!}{%
\begin{tabular}{lccccc}
\toprule
Generator & Mean MAP alignment & Signed gap (descriptive interval) & Retention & Nearest-record similarity & Row novelty \\
\midrule
LSM & 0.9804 & $-0.0070$ $[-0.0087,-0.0054]$ & 0.9929 & 0.6122 & 0.3878 \\
Chow--Liu hybrid & 0.9876 & 0.0002 $[-0.0003,0.0006]$ & 1.0002 & 0.9934 & 0.0066 \\
CTGAN & 0.9737 & $-0.0137$ $[-0.0165,-0.0105]$ & 0.9861 & 0.9185 & 0.0815 \\
Independent baseline & 0.9771 & $-0.0103$ $[-0.0154,-0.0049]$ & 0.9896 & 0.9487 & 0.0513 \\
Original-data control & 0.9874 & 0.0000 & 1.0000 & 1.0000 & 0.0000 \\
\bottomrule
\end{tabular}%
}
\end{table}

As shown in Table~\ref{tab:infb_results}, LSM retained 0.9929 of the original-data control MAP alignment and had row novelty 0.3878. The Chow--Liu hybrid had a mean score nearly identical to the control, but its mean nearest-record similarity was 0.9934, corresponding to novelty 0.0066. Thus, close agreement in the aggregate alignment score was accompanied by almost no separation from observed sequences. CTGAN and the independent baseline had both lower alignment and lower novelty than LSM. Panels (a) and (b) of Figure~\ref{fig:additional_domains} show the corresponding run-level distributions across the seven surveillance windows, highlighting the near-zero Chow--Liu alignment gap together with its consistently minimal novelty. The analysis did not evaluate the consequences of these differences for forecasting or emergence-risk estimation, which require separate prospective validation.

\subsection{National Social Life, Health, and Aging Project}
\label{sec:nshap_results}

NSHAP is a population-based longitudinal study of older adults in the United States, with measurements spanning physical and mental health, function, cognition, behavior, and social relationships \citep{jaszczak2014nshap,nshapround2}. Its heterogeneous schema provides a human-health setting in which synthetic records must preserve conditional relationships across clinical, functional, behavioral, and social domains rather than only variable-wise distributions. We analyzed eight archived evaluation instances from Round 2, each containing 861--862 modeled variables and 1,000 evaluation records per condition. Because the instances were derived from the same Round 2 source, they were used to assess stability of the archived analysis configuration rather than treated as independent cohorts. Results across the eight archived NSHAP analyses are summarized in Table~\ref{tab:nshap_results}.

As shown in Table~\ref{tab:nshap_results}, the mean LSM MAP-alignment score was 0.8294, compared with 0.7655 for the original-data control, and row novelty was 0.8474. The positive difference occurred in all eight archived analyses. Under the definition of MAP alignment, observed records contain both modal and nonmodal conditional realizations, whereas a generator may preferentially produce values near conditional modes. The NSHAP result is therefore consistent with conditional-mode concentration rather than superior recovery of the observed joint distribution. The Chow--Liu hybrid was closest to the control in mean alignment, while CTGAN and the independent baseline had negative signed gaps and intermediate novelty. Panels (c) and (d) of Figure~\ref{fig:additional_domains} show that the positive LSM alignment gap and high novelty persist across the archived NSHAP analyses, reinforcing the mode-concentration interpretation.

\begin{figure}[!ht] 
  \centering
  \tikzexternalenable
  \tikzsetnextfilename{figres2}
  \iftikzX
\gdef\ControlledBoxPlot#1#2#3#4{%
  \addplot[
    boxplot/draw direction=y,
    boxplot={
      draw position=#1,
      box extend=0.37,
      whisker range=1.5
    },
    draw=#2,
    fill=#2!9,
    line width=0.55pt,
    boxplot/every box/.style={draw=#2, fill=#2!9, line width=0.9pt},
    boxplot/every whisker/.style={draw=#2, line width=1.2pt},
    boxplot/every median/.style={draw=black, line width=1pt},
    mark=o,
    mark size=1.2pt,
    mark options={draw=#2, fill=white, line width=0.35pt}
  ] table[y=#4, col sep=comma] {#3};%
}\definecolor{lsmcol}{RGB}{31,119,180}%
\definecolor{chowcol}{RGB}{230,126,34}%
\definecolor{ctgancol}{RGB}{46,139,87}%
\definecolor{basecol}{RGB}{192,57,43}%
\definecolor{ctrlcol}{RGB}{117,107,177}%
\begin{tikzpicture}[font=\bf\sffamily\fontsize{7}{7}\selectfont]
\begin{groupplot}[
  group style={group size=2 by 2, horizontal sep=0.62in, vertical sep=0.72in},
  width=1.65in,
  height=1.3in,
  scale only axis=true,
  axis line style={draw=black!75, line width=0.45pt},
  tick style={draw=black!65, line width=0.35pt},
  tick align=outside,
  major tick length=0pt,
  grid=major,
  major grid style={draw=black!10, line width=0.30pt},
  scaled ticks=false,
  xmin=0.55, xmax=5.45,
  xtick={1,2,3,4,5},
  xticklabels={LSM,Chow--Liu,CTGAN,Independent,Original (control)},
  x tick label style={rotate=25,anchor=east,yshift=-.04in},
]
\nextgroupplot[
  title={(a) Influenza B signed gap},
  ylabel={$\Upsilon_{\mathrm{syn}}-\Upsilon_{\mathrm{control}}$},
  ymin=-0.022, ymax=0.004,
  ytick={-0.02,-0.01,0},
]
\ControlledBoxPlot{1}{lsmcol}{data/infb_lsm.csv}{signed_gap_vs_control}
\ControlledBoxPlot{2}{chowcol}{data/infb_chow_liu.csv}{signed_gap_vs_control}
\ControlledBoxPlot{3}{ctgancol}{data/infb_ctgan.csv}{signed_gap_vs_control}
\ControlledBoxPlot{4}{basecol}{data/infb_baseline.csv}{signed_gap_vs_control}
\ControlledBoxPlot{5}{ctrlcol}{data/infb_original_sample_control.csv}{signed_gap_vs_control}
\addplot[black!55, thin, no marks, forget plot] coordinates {(0.55,0) (5.45,0)};

\nextgroupplot[
  title={(b) Influenza B row novelty},
  ylabel={Row novelty},
  ymin=-0.02, ymax=0.43,
  ytick={0,0.1,0.2,0.3,0.4},
]
\ControlledBoxPlot{1}{lsmcol}{data/infb_lsm.csv}{row_novelty}
\ControlledBoxPlot{2}{chowcol}{data/infb_chow_liu.csv}{row_novelty}
\ControlledBoxPlot{3}{ctgancol}{data/infb_ctgan.csv}{row_novelty}
\ControlledBoxPlot{4}{basecol}{data/infb_baseline.csv}{row_novelty}
\ControlledBoxPlot{5}{ctrlcol}{data/infb_original_sample_control.csv}{row_novelty}

\nextgroupplot[
  title={(c) NSHAP signed gap},
  ylabel={$\Upsilon_{\mathrm{syn}}-\Upsilon_{\mathrm{control}}$},
  ymin=-0.052, ymax=0.082,
  ytick={-0.04,0,0.04,0.08},
]
\ControlledBoxPlot{1}{lsmcol}{data/nshap_lsm.csv}{signed_gap_vs_control}
\ControlledBoxPlot{2}{chowcol}{data/nshap_chow_liu.csv}{signed_gap_vs_control}
\ControlledBoxPlot{3}{ctgancol}{data/nshap_ctgan.csv}{signed_gap_vs_control}
\ControlledBoxPlot{4}{basecol}{data/nshap_baseline.csv}{signed_gap_vs_control}
\ControlledBoxPlot{5}{ctrlcol}{data/nshap_original_sample_control.csv}{signed_gap_vs_control}
\addplot[black!55, thin, no marks, forget plot] coordinates {(0.55,0) (5.45,0)};

\nextgroupplot[
  title={(d) NSHAP row novelty},
  ylabel={Row novelty},
  ymin=-0.03, ymax=0.90,
  ytick={0,0.2,0.4,0.6,0.8},
]
\ControlledBoxPlot{1}{lsmcol}{data/nshap_lsm.csv}{row_novelty}
\ControlledBoxPlot{2}{chowcol}{data/nshap_chow_liu.csv}{row_novelty}
\ControlledBoxPlot{3}{ctgancol}{data/nshap_ctgan.csv}{row_novelty}
\ControlledBoxPlot{4}{basecol}{data/nshap_baseline.csv}{row_novelty}
\ControlledBoxPlot{5}{ctrlcol}{data/nshap_original_sample_control.csv}{row_novelty}
\end{groupplot}
\pgfresetboundingbox
\path[use as bounding box]
  (-.5in,-5.45in) rectangle (4in,1.75in);

\end{tikzpicture}%
    \else \includegraphics[width=1.6\textwidth]{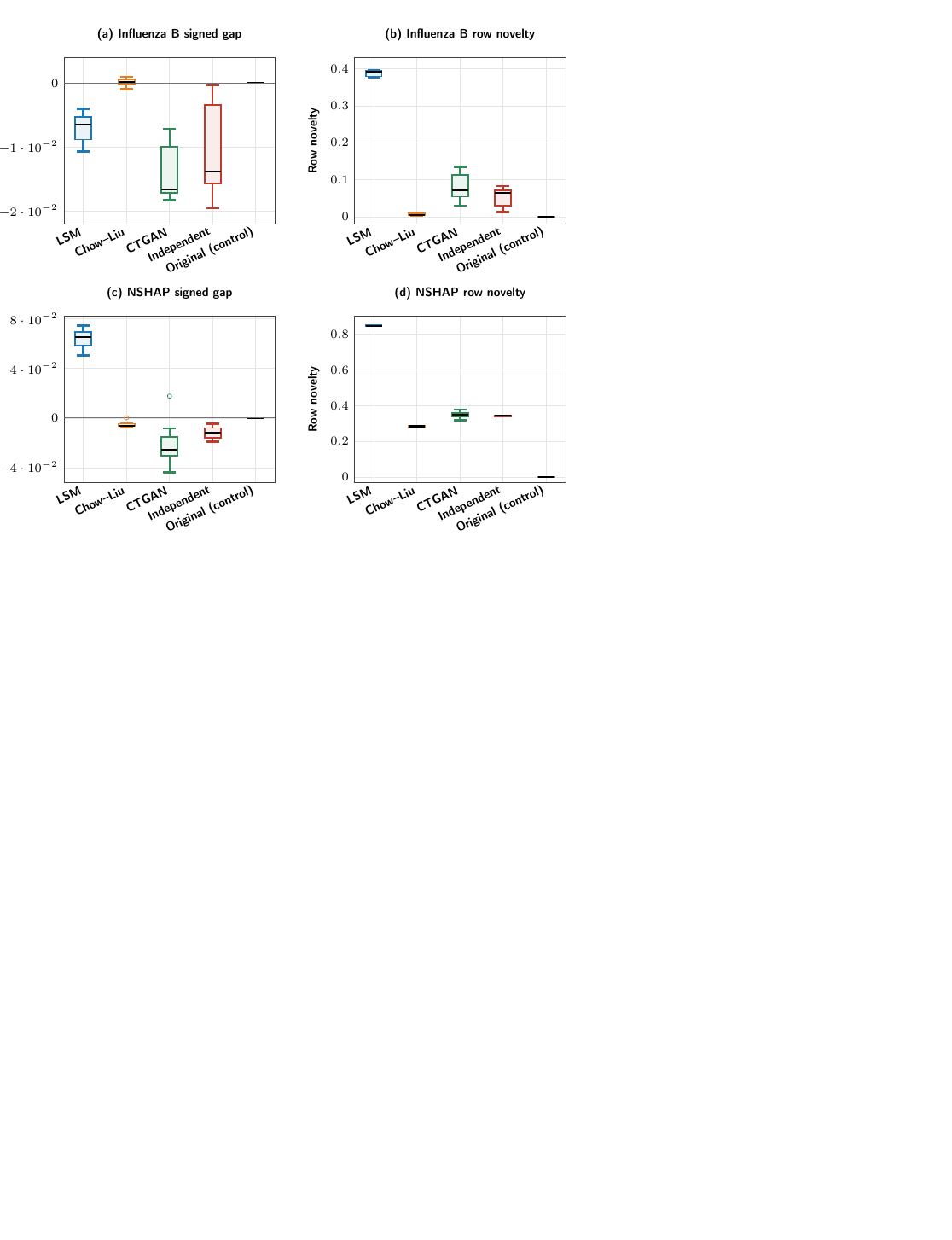}
    \fi

\vspace{-7in}
  
\caption{Run-level results for the additional applications. Panels (a) and (b) show signed MAP-alignment gaps and row novelty across seven overlapping influenza B windows. Panels (c) and (d) show the corresponding quantities across eight archived NSHAP Round 2 analyses. The original-data control is fixed at zero gap and zero novelty.}
\label{fig:additional_domains}
\end{figure}

\section{Discussion}

\begin{table}[!htp]
\centering
\caption{NSHAP Round 2 results across eight archived analysis instances. The output definitions are the same as in Table~\ref{tab:gss_benchmark}. Signed-gap intervals are descriptive run-resampling intervals.}
\label{tab:nshap_results}
\small
\setlength{\tabcolsep}{3.2pt}
\resizebox{\textwidth}{!}{%
\begin{tabular}{lccccc}
\toprule
Generator & Mean MAP alignment & Signed gap (descriptive interval) & Retention & Nearest-record similarity & Row novelty \\
\midrule
LSM & 0.8294 & 0.0639 $[0.0583,0.0692]$ & 1.0835 & 0.1526 & 0.8474 \\
Chow--Liu hybrid & 0.7602 & $-0.0053$ $[-0.0065,-0.0035]$ & 0.9931 & 0.7153 & 0.2847 \\
CTGAN & 0.7449 & $-0.0206$ $[-0.0313,-0.0079]$ & 0.9730 & 0.6521 & 0.3479 \\
Independent baseline & 0.7535 & $-0.0120$ $[-0.0155,-0.0085]$ & 0.9844 & 0.6575 & 0.3425 \\
Original-data control & 0.7655 & 0.0000 & 1.0000 & 1.0000 & 0.0000 \\
\bottomrule
\end{tabular}%
}
\end{table}

The methodological contribution of this work is a conditional-distribution approach to validating synthetic multivariate data. Full conditional distributions are directly tied to the joint law: under positivity and compatibility, the complete normalized conditional profile identifies the distribution, and its integrated $L^1$ difference defines a metric between finite-state generative processes. MAP alignment is the corresponding one-sided sample diagnostic. It asks whether records generated by an arbitrary synthesizer conform to conditional relationships estimated from held-out real data, without requiring access to the generator likelihood or selecting a downstream task in advance. Finite-sample concentration and consistency results make the statistic an estimable reference comparison rather than an informal plausibility score.

This distinction matters for biostatistics because the scientific utility of synthetic health and population data often depends on conditional rather than marginal validity. Regression coefficients, effect modification, risk stratification, subgroup comparisons, missing-data models, and transportability analyses can all be affected when cross-variable structure is distorted, even if univariate summaries are reproduced. A task-specific validation can establish fitness for the task that was chosen, but it cannot certify the data for analyses that were not anticipated. The proposed framework supplies a generator-agnostic structural diagnostic that can be applied before or alongside estimand-specific validation.

The empirical results illustrate why MAP alignment must be interpreted relative to an original-data control and jointly with record novelty. In GSS, LSM was compatible with the control in mean alignment while maintaining substantial separation from observed respondents, the desired pattern for preserving learned conditional structure without simple row reuse. In influenza B, Chow--Liu closely matched the control alignment but produced records with almost no novelty, showing that a favorable aggregate alignment score can coexist with near-reproduction of observed sequences. In NSHAP, LSM exceeded the control alignment mean. Because real records include legitimate nonmodal realizations, whereas a generator can preferentially produce conditional modes, an above-control score is not evidence of superior recovery of the joint distribution; together with high novelty, it indicates conditional-mode concentration. The three applications therefore expose different failure modes that would be difficult to distinguish using a single scalar utility measure.

MAP alignment is consequently a reference-based diagnostic, not an objective to maximize and not a proof that two joint distributions are equal. Nearest-record similarity is likewise a memorization diagnostic rather than a formal disclosure-risk guarantee. For a specified clinical, epidemiologic, or surveillance estimand, these quantities should supplement marginal checks, task-specific operating characteristics, calibration analyses, and appropriate privacy assessments. Their role is to determine whether a synthetic dataset has retained an identifying layer of multivariate structure and whether apparent agreement is attributable to reuse or mode concentration before the data are used for downstream inference.

The framework is especially relevant to settings in which likelihoods are unavailable, variables are numerous and heterogeneous, and downstream analyses cannot be exhaustively enumerated. A common conditional evaluator makes the statistical target explicit and permits the same structural questions to be asked across health surveys, real-world data, and genomic surveillance while leaving application-specific utility to be assessed separately.

\section{Limitations}

The identification results require strict positivity and compatibility of the full conditional system. Separately fitted conditional learners need not be exactly compatible with a single joint distribution. MAP alignment remains a well-defined evaluator statistic in that setting, but interpreting $d_\mu$ as a metric between underlying processes requires the stated compatibility assumption.

The formal development is for finite product spaces. The empirical analyses use categorical or discretely represented variables, while the Gaussian examples use a continuous-density analogue. A general treatment of continuous and mixed-type variables remains to be developed.

The analysis units do not represent exchangeable population replicates. GSS waves are temporally ordered and were analyzed without survey weights, strata, or primary sampling units. Influenza B windows overlap in calendar time. NSHAP analyses share one Round 2 source and use the archived full schema, including design and administrative variables. The reported intervals summarize variation within the observed analyses rather than population-level sampling uncertainty. Results may also depend on missingness patterns and discretization.

Computational cost increases with the number of variables and evaluation records because the evaluator fits one conditional model per coordinate and scores each record-coordinate pair. Parallel computation made the present analyses tractable, but repeated fitting and substantially larger schemas may require additional approximations.

\section{Future work}

Further work should formalize the continuous and mixed-type extension, incorporate complex survey design into the reference measure and aggregation, and repeat the NSHAP analysis using a prespecified set of health and social variables. Evaluator sensitivity should be studied through repeated data partitions and conditional-model fits. Empirical reports should include duplicate rates and upper-tail nearest-record similarities, and genomic-surveillance studies should assess whether the proposed diagnostics predict prospective forecasting or emergence-risk performance.

\section{Conclusion}

Synthetic-data validation should determine whether the joint dependence structure of the target population is retained and whether apparent fidelity is obtained by reproducing source records. The normalized full-conditional profile provides an identifying representation and a metric on finite-state generative processes, while MAP alignment supplies an estimable one-sided comparison against held-out real data. Paired with nearest-record similarity, the framework distinguished loss of dependence, near-reuse, and conditional-mode concentration across human-health, genomic-surveillance, and repeated-survey applications. These diagnostics provide a theory-grounded structural validation layer for biostatistical uses of synthetic data and are intended to complement estimand-specific utility and disclosure-risk analyses.

\section{Data and code availability}

The GSS data are publicly available from NORC \citep{gss2024}. NSHAP Round 2 data and documentation are distributed through ICPSR \citep{nshapround2}. Processed influenza B evaluation outputs and run metadata are included in the reproducibility materials. The LSYNTH implementation and examples are available at \url{https://github.com/zeroknowledgediscovery/lsynth}; the Python package can be installed with \texttt{pip install lsynth}.

\bibliographystyle{plainnat}
\bibliography{lsyn}

@article{brook1964,
  author    = {Brook, D.},
  title     = {On the distinction between conditional and unconditional distributions},
  journal   = {Biometrika},
  volume    = {51},
  number    = {3--4},
  pages     = {481--483},
  year      = {1964}
}

@article{dobrushin1968,
  author    = {Dobrushin, R. L.},
  title     = {The description of a random field by means of conditional probabilities and conditions of its regularity},
  journal   = {Theory of Probability and its Applications},
  volume    = {13},
  number    = {2},
  pages     = {197--224},
  year      = {1968}
}

@incollection{hammersley1971,
  author    = {Hammersley, J. M. and Clifford, P.},
  title     = {Markov fields on finite graphs and lattices},
  booktitle = {Markov Random Fields},
  editor    = {Grimmett, P.},
  publisher = {Springer},
  year      = {2017},
  note      = {Originally written in 1971 as an unpublished manuscript}
}

@article{besag1974,
  author    = {Besag, J.},
  title     = {Spatial interaction and the statistical analysis of lattice systems},
  journal   = {Journal of the Royal Statistical Society, Series B},
  volume    = {36},
  number    = {2},
  pages     = {192--236},
  year      = {1974}
}

@article{hothorn2006,
  author    = {Hothorn, T. and Hornik, K. and Zeileis, A.},
  title     = {Unbiased recursive partitioning: A conditional inference framework},
  journal   = {Journal of Computational and Graphical Statistics},
  volume    = {15},
  number    = {3},
  pages     = {651--674},
  year      = {2006}
}

@article{hothorn2006party,
  author    = {Hothorn, T. and Hornik, K. and Zeileis, A.},
  title     = {{party}: A laboratory for recursive partitioning},
  journal   = {R News},
  volume    = {6},
  number    = {2},
  pages     = {17--23},
  year      = {2006}
}

@article{strobl2007,
  author    = {Strobl, C. and Boulesteix, A. and Hothorn, T. and Zeileis, A.},
  title     = {Bias in random forest variable importance measures: Illustrations, sources and a solution},
  journal   = {BMC Bioinformatics},
  volume    = {8},
  number    = {25},
  pages     = {1--21},
  year      = {2007}
}

@article{hoeffding1963,
  author    = {Hoeffding, W.},
  title     = {Probability inequalities for sums of bounded random variables},
  journal   = {Journal of the American Statistical Association},
  volume    = {58},
  number    = {301},
  pages     = {13--30},
  year      = {1963}
}

@inproceedings{kingma2014autoencoding,
  author    = {Diederik P. Kingma and Max Welling},
  title     = {Auto-Encoding Variational Bayes},
  booktitle = {Proc. International Conference on Learning Representations (ICLR)},
  year      = {2014},
  note      = {arXiv:1312.6114}
}

@article{jelinek1977perplexity,
  author  = {Frederick Jelinek and Robert L. Mercer and Lalit R. Bahl and James K. Baker},
  title   = {Perplexity---a Measure of the Difficulty of Speech Recognition Tasks},
  journal = {Journal of the Acoustical Society of America},
  volume  = {62},
  number  = {S1},
  pages   = {S63},
  year    = {1977},
  doi     = {10.1121/1.2016299}
}

@article{esteban2017realvalued,
  author  = {Crist{\'o}bal Esteban and Stephanie L. Hyland and Gunnar R{\"a}tsch},
  title   = {Real-valued (Medical) Time Series Generation with Recurrent Conditional {GANs}},
  journal = {arXiv preprint arXiv:1706.02633},
  year    = {2017}
}

@inproceedings{heusel2017gans,
  author    = {Martin Heusel and Hubert Ramsauer and Thomas Unterthiner and Bernhard Nessler and Sepp Hochreiter},
  title     = {{GAN}s Trained by a Two Time-Scale Update Rule Converge to a Local {N}ash Equilibrium},
  booktitle = {Advances in Neural Information Processing Systems (NeurIPS)},
  year      = {2017},
  note      = {Introduces the Fr\'echet Inception Distance (FID)}
}

@inproceedings{lopezpaz2017revisiting,
  author    = {David Lopez-Paz and Maxime Oquab},
  title     = {Revisiting Classifier Two-Sample Tests},
  booktitle = {Proc. International Conference on Learning Representations (ICLR)},
  year      = {2017}
}

@article{snoke2018general,
  author  = {Joshua Snoke and Gillian M. Raab and Beata Nowok and Chris Dibben and Aleksandra Slavkovic},
  title   = {General and Specific Utility Measures for Synthetic Data},
  journal = {Journal of Privacy and Confidentiality},
  volume  = {8},
  number  = {1},
  year    = {2018}
}

@article{nowok2016synthpop,
  author  = {Beata Nowok and Gillian M. Raab and Chris Dibben},
  title   = {synthpop: Bespoke Creation of Synthetic Data in {R}},
  journal = {Journal of Statistical Software},
  volume  = {74},
  number  = {11},
  year    = {2016}
}

@misc{gss2024,
  author       = {{NORC at the University of Chicago}},
  title        = {General Social Survey: 1972--2024 Cumulative Data File},
  year         = {2024},
  howpublished = {General Social Survey},
  url          = {https://gss.norc.org/}
}

@article{sizemore2024digital,
  author  = {Sizemore, Nicholas and Oliphant, Kaitlyn and Zheng, Ruolin and Martin, Camilia R. and Claud, Erika C. and Chattopadhyay, Ishanu},
  title   = {A Digital Twin of the Infant Microbiome to Predict Neurodevelopmental Deficits},
  journal = {Science Advances},
  volume  = {10},
  number  = {15},
  pages   = {eadj0400},
  year    = {2024},
  doi     = {10.1126/sciadv.adj0400}
}

@article{wu2026emergenet,
  author  = {Wu, Kevin and Li, Feng and Chattopadhyay, Ishanu},
  title   = {Emergenet: A Digital Twin of Influenza A Evolution for Vaccine Strain Forecasting and Emergence Risk Assessment},
  journal = {Military Medicine},
  year    = {2026},
  doi     = {10.1093/milmed/usag255},
  note    = {In print}
}

@article{chow1968,
  author  = {Chow, C. K. and Liu, C. N.},
  title   = {Approximating Discrete Probability Distributions with Dependence Trees},
  journal = {IEEE Transactions on Information Theory},
  volume  = {14},
  number  = {3},
  pages   = {462--467},
  year    = {1968},
  doi     = {10.1109/TIT.1968.1054142}
}

@inproceedings{xu2019ctgan,
  author    = {Xu, Lei and Skoularidou, Maria and Cuesta-Infante, Alfredo and Veeramachaneni, Kalyan},
  title     = {Modeling Tabular Data Using Conditional {GAN}},
  booktitle = {Advances in Neural Information Processing Systems},
  volume    = {32},
  year      = {2019}
}

@article{jaszczak2014nshap,
  author  = {Jaszczak, Angela and O'Doherty, Katie and Colicchia, Michael and Satorius, Jennifer and McPhillips, Jane and Czaplewski, Michael and Smith, Stephen},
  title   = {Continuity and Innovation in the Data Collection Protocols of the Second Wave of the National Social Life, Health, and Aging Project},
  journal = {The Journals of Gerontology: Series B},
  volume  = {69},
  number  = {Suppl 2},
  pages   = {S4--S14},
  year    = {2014},
  doi     = {10.1093/geronb/gbu031}
}

@misc{nshapround2,
  author       = {Waite, Linda J. and Cagney, Kathleen A. and Dale, William and Huang, Elbert S. and Laumann, Edward O. and McClintock, Martha K. and Cornwell, Benjamin},
  title        = {National Social Life, Health, and Aging Project ({NSHAP}): Round 2 and Partner Data Collection, United States, 2010--2011},
  year         = {2023},
  howpublished = {Inter-university Consortium for Political and Social Research},
  doi          = {10.3886/ICPSR34921.v5}
}

\end{document}